\newcommand{\R}{\mathbb{R}}
\newcommand{\dotp}[2]{\langle #1 , #2 \rangle}
\newtheorem{claim}{Claim}
\newtheorem{theorem}{Theorem}
\newtheorem{lemma}{Lemma}
\newtheorem{fact}{Fact}
\newtheorem{proposition}{Proposition}
\newtheorem{definition}{Definition}
\newtheorem{problem}{Problem}
\newtheorem{remark}{Remark}
\newcommand{\ggi}{\textsc{GGI}\xspace}
\newcommand{\gaut}{\textsc{Geom{-}AUT}\xspace}
\newcommand{\evgi}{\textsc{EVGI}\xspace}
\newcommand{\haut}{\textsc{Hyp{-}AUT}\xspace}
\newcommand{\Q}{\mathbb{Q}}
\renewcommand{\P}{\mathcal{P}}
\newcommand{\proj}{\textrm{proj}}
\newcommand{\sym}{\textrm{Sym}}
\newcommand{\Aut}{\textrm{Aut}}
\newcommand{\NP}{\textsf{NP}}
\newcommand{\coAM}{\textsf{coAM}}
\newcommand{\ISO}{\textrm{ISO}}
\newcommand{\cmp}{O^*(k^{O(k)})\xspace}
\title{Faster FPT Algorithm for Graph Isomorphism Parameterized by
  Eigenvalue Multiplicity}
\author{V. Arvind and Gaurav Rattan\\
        The Institute of Mathematical Sciences\\
        C.I.T. Campus\\
        Chennai 600 113, India\\
        \{arvind,grattan\}@imsc.res.in\\}\date{}
\date{}
\begin{document}

\maketitle

\thispagestyle{plain}
\pagestyle{plain}

\begin{abstract}
We give a $\cmp$ time isomorphism testing algorithm for graphs of
eigenvalue multiplicity bounded by $k$ which improves on the previous
best running time bound of $O^*(2^{O(k^2/\log
  k)})$~\cite{ep}.\footnote{Throughout the paper, we use the
  $O^*(\dot)$ notation to suppress multiplicative factors that are
  polynomial in input size.}
\end{abstract}

\section{Introduction}\label{one}

Two simple undirected graphs $X=(V,E)$ and $X'=(V',E')$ are said to be
\emph{isomorphic} if there is a bijection $\varphi:V\rightarrow V'$
such that for all pairs $\{u,v\}\in{V\choose 2}$, $\{u,v\}\in E$ if
    and only if $\{\varphi(u),\varphi(v)\}\in E'$. Given two graphs
    $X$ and $X'$ as input the decision problem \emph{Graph
      Isomorphism} asks whether $X$ is isomorphic to $X'$. An
    outstanding open problem in the field of algorithms and complexity
    is whether the Graph Isomorphism problem has a polynomial-time
    algorithm.  The asymptotically fastest known algorithm for Graph
    Isomorphism has worst-case running time time $2^{O(\sqrt{n\lg
        n})}$ on $n$-vertex graphs \cite{BL83}. On the other hand, the
    problem is unlikely to be NP-complete as it is in $\NP\cap\coAM$
    \cite{BHZ}.

However, efficient algorithms for Graph Isomorphism have been
discovered over the years for various interesting subclasses of
graphs, like, for example, bounded degree graphs \cite{Luks}, bounded
genus graphs \cite{Miller,Grohe}, bounded eigenvalue multiplicity
graphs \cite{bgm,ep}. 

The focus of the present paper is Graph Isomorphism for bounded
eigenvalue multiplicity graphs. This was first studied by Babai et al
\cite{bgm} who gave an $n^{O(k)}$ time algorithm for it. There is also
an NC algorithm\footnote{NC denotes the class of problems that can be
  solved in in the parallel-RAM model in polylogarithmic time using
  polynomially many processors.} for the problem for constant $k$ due
to Babai \cite{bab}. Using an approach based on cellular algebras and
some nontrivial group theory, Evdokimov and Ponomarenko \cite{ep} gave
an $O^*(2^{O(k^2/\log k)})$ algorithm for it. This puts the problem in
FPT, which is the class of \emph{fixed parameter tractable}
problems. The parameter in question here is the bound $k$ on the
eigenvalue multiplicity of the input graphs.

In this paper we obtain a $\cmp$ time isomorphism algorithm for graphs
of eigenvalue multiplicity bounded by $k$. We follow a relatively
simple geometric approach to the problem using integer
lattices. Recently, we obtained an $\cmp$ time algorithm for
\emph{Point Set Congruence} (abbreviated $\ggi$) in $\Q^k$ in the
$\ell_2$ metric \cite{ar}. Our algorithm is based on a lattice
isomorphism algorithm of running time $\cmp$, due to Haviv and Regev
\cite{HR13}. They design an $O^*(n^{O(n)})$ time algorithm for
checking if two integer lattices in $\R^n$ are isomorphic under an
orthogonal transformation. In \cite{ar} we adapt their technique to
solve the Point Set Congruence problem, $\ggi$, in $\cmp$ time.

Now, in this paper, building on our previous algorithm for $\ggi$
\cite{ar}, combined with some permutation group algorithms, we first
give an $O^*(k^{O(k)})$ time algorithm for a suitable \emph{geometric
  automorphism} problem, defined in Section~\ref{four}. It turns out
that the bounded eigenvalue multiplicity Graph Isomorphism can be
efficiently reduced to this geometric automorphism problem, which
yields the $\cmp$ time algorithm for it.


\section{Preliminaries}\label{two}

Let $[n]$ denote the set $\{1,\dots,n\}$.  We assume basic familiarity
with the notions of vector spaces, linear transformations and
matrices.  The projection of a vector $v\in\R^n$ on a subspace
$S\subseteq \R^n$ is denoted as $proj_S(v)$. The \emph{inner product}
of vectors $u=(u_1,\dots,u_n)$ and $v=(v_1,\dots,v_n)$ is $\dotp{u}{v}
= \displaystyle\sum_{ i \in [n]} u_i v_i $. The \emph{euclidean norm},
$\|u\|$, of a vector $u$, is $\sqrt{\dotp{u}{u}}$, and the
\emph{distance} between two points $u$ and $v$ in $\R^n$ is $\|u -
v\|$. Vectors $u,v$ are \emph{orthogonal} if $\dotp{u}{v} =
0$. Subspaces $U,V$ are orthogonal if for every $u \in U, v\in V$,
$u,v$ are orthogonal.  A set of subspaces $W_1,\dots,W_r$ is said to
be an \emph{orthogonal decomposition} of $\R^n$ if each pair of
subspaces are mutually orthogonal, and they span $\R^n$.  A square
matrix $M$ is orthogonal if $M^TM = I$. A linear transformation $T$
\emph{stabilizes} a subspace $S$ if $T(S) \subseteq S$.  Given a
matrix $M$, we call $\lambda$ to be an \emph{eigenvalue} of $M$ if
there exists a vector $v$ such that $Mv=\lambda v$.  We call $v$ to be
an \emph{eigenvector} of $M$ of eigenvalue $\lambda$.  The set of all
eigenvectors of $M$ of eigenvalue $\lambda$ is a subspace of $\R^n$.
The following well-known fact about $n \times n$ symmetric matrices
will be useful.

\begin{fact}
All eigenvalues of a symmetric matrix are real. Moreover, the
eigenspaces form an orthogonal decomposition of $\R^n$.
\end{fact}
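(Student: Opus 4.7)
The plan is to prove the two assertions separately, since they are independent facts which together give the statement. Let $M$ be an $n \times n$ real symmetric matrix, and write $\langle \cdot, \cdot \rangle$ for the standard (Hermitian) inner product extended to $\C^n$.

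For the reality of the eigenvalues, I would start by viewing $M$ as a matrix over $\C$ (where it must have an eigenvalue, since its characteristic polynomial splits). Let $\lambda \in \C$ and $v \in \C^n$ with $v \ne 0$ satisfy $Mv = \lambda v$. The key manipulation is to compute $\langle Mv, v \rangle$ in two ways: on the one hand it equals $\overline{\lambda} \langle v, v \rangle$ after pulling the scalar out of the first slot, and on the other hand, since $M$ is real and symmetric, $\langle Mv, v \rangle = \langle v, M v \rangle = \lambda \langle v, v \rangle$. Because $\langle v, v \rangle > 0$, this forces $\lambda = \overline{\lambda}$, so $\lambda \in \R$. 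Once $\lambda$ is real, an eigenvector can be taken in $\R^n$ by taking the real or imaginary part of any complex eigenvector (at least one of which is nonzero and remains an eigenvector).

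For the orthogonal decomposition, there are two ingredients. First, eigenspaces for distinct eigenvalues are orthogonal: if $Mu = \lambda u$ and $Mv = \mu v$ with $\lambda \ne \mu$, then using $M^T = M$ we get $\lambda \langle u, v\rangle = \langle Mu, v\rangle = \langle u, Mv\rangle = \mu \langle u, v\rangle$, forcing $\langle u, v\rangle = 0$. Second, the eigenspaces jointly span $\R^n$. I would prove this by induction on $n$. Pick a real unit eigenvector $v_1$ of $M$ (which exists by the first part) and consider the orthogonal complement $S = v_1^\perp \subseteq \R^n$. The crucial observation is that $S$ is $M$-invariant: for $w \in S$, $\langle v_1, Mw\rangle = \langle Mv_1, w\rangle = \lambda_1 \langle v_1, w\rangle = 0$. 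The restriction of $M$ to $S$ is again symmetric (with respect to the inherited inner product), so the inductive hypothesis gives an orthogonal decomposition of $S$ into eigenspaces of $M|_S$. Combining with $\R v_1$ and grouping vectors by common eigenvalue yields the desired orthogonal decomposition of $\R^n$ into eigenspaces of $M$.

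The only nontrivial step is establishing that a symmetric $M$ has at least one real eigenvector, and this is already handled by the reality proof plus the real/imaginary part trick; everything else is a clean induction using the invariance of orthogonal complements. I expect no algebraic obstacle, so the write-up will essentially be these three short computations arranged in order.
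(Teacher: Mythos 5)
Your proposal is correct: the Hermitian inner-product computation for reality of eigenvalues, the orthogonality of eigenspaces belonging to distinct eigenvalues, and the induction on the $M$-invariant orthogonal complement of a real unit eigenvector together give exactly the spectral theorem for real symmetric matrices, which is what this Fact asserts. Note that the paper itself states this as a well-known fact and supplies no proof, so there is no authorial argument to compare against; your write-up would simply be filling in the standard textbook proof that the authors take for granted.
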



We use $\sym(V)$ to denote group of all permutations on a finite set
$V$. Given a graph $X=(V,E)$, a permutation $\pi \in \sym(V)$ is an
\emph{automorphism} of the graph $X$ if for all pairs $\{u,v\}$ of
  vertices, $\{u,v\} \in E$ iff $\{\pi(u),\pi(v)\} \in E$. In other
    words, $\pi$ preserves adjacency in $X$. The set of all
    automorphisms of a graph $X$, denoted by $\Aut(X)$, is a subgroup
    of $\sym(V)$, which is denoted by $\Aut(X)\le \sym(V)$.

We can similarly talk of automorphisms of hypergraphs: Let $X=(V,E)$
be a hypergraph with vertex set $V$ and edge set $E\subset 2^E$.  A
permutation $\pi \in \sym(V)$ is an \emph{automorphism} of the
\emph{hypergraph} $X$ if for every subset $e\subseteq V$, $e\in E$ if
and only if $\pi(e)\in E$, where $\pi(e)=\{\pi(v)\mid v\in e\}$.


Given an undirected graph $X=(V,E)$, the set $V$ indexed by $[n]$, we
define its \emph{adjacency matrix} $A_X$ is defined as follows:
$A_X(i,j) = 1$ if $\{v_i,v_j\} \in E$ and $0$ otherwise. Clearly, the
adjacency matrix $A_X$ of an undirected graph $X$ is symmetric. Given
a permutation $\pi:[n]\rightarrow[n]$, we can associate a natural
permutation matrix $M_{\pi}$ with it.  It is easy to verify that $\pi$
is an automorphism of a graph $G$ iff $M_{\pi}^T A_X M_{\pi} = A_X$.
Since permutation matrices are orthogonal matrices, the following
simple folklore lemma characterizes the automorphisms of a graph
through the action of the associated matrix on the eigenspaces of its
adjacency matrix.

\begin{lemma}\label{prelim1}
Let $X$ be the adjacency matrix of a graph $G=(V,E)$.
Then, a permutation $\pi \in \sym(V)$ is an automorphism of $G$
iff the associated linear map $M_{\pi}$ preserves the eigenspaces of $X$.
\end{lemma}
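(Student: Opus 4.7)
The plan is a standard double implication built around the identity $M_\pi^T A_X M_\pi = A_X$, which (using the orthogonality of $M_\pi$) is equivalent to saying that $M_\pi$ commutes with $A_X$. So the whole lemma reduces to the statement: an orthogonal matrix $M$ commutes with the symmetric matrix $A_X$ iff $M$ stabilizes every eigenspace of $A_X$. I will prove both directions of this equivalent reformulation.

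For the forward direction, I would start from the fact that $\pi \in \Aut(G)$ means $M_\pi^T A_X M_\pi = A_X$, which, since $M_\pi$ is a permutation matrix and hence orthogonal ($M_\pi^T = M_\pi^{-1}$), rearranges to $A_X M_\pi = M_\pi A_X$. Then for any eigenvector $v$ of $A_X$ with eigenvalue $\lambda$, I would compute $A_X (M_\pi v) = M_\pi A_X v = \lambda (M_\pi v)$, so $M_\pi v$ lies in the same eigenspace $V_\lambda$. Hence $M_\pi$ stabilizes each eigenspace.

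For the reverse direction, I would invoke the fact cited just above the lemma: the eigenspaces $V_{\lambda_1}, \dots, V_{\lambda_r}$ of the symmetric matrix $A_X$ form an orthogonal decomposition of $\R^n$. Thus every $v \in \R^n$ admits a unique decomposition $v = \sum_i v_i$ with $v_i \in V_{\lambda_i}$. I would then compute, using the hypothesis $M_\pi v_i \in V_{\lambda_i}$, that
\[
A_X M_\pi v = \sum_i \lambda_i M_\pi v_i = M_\pi \sum_i \lambda_i v_i = M_\pi A_X v,
\]
so $M_\pi$ commutes with $A_X$. Multiplying both sides by $M_\pi^T$ on the left and using orthogonality again recovers $M_\pi^T A_X M_\pi = A_X$, so $\pi$ is an automorphism of $G$.

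I do not anticipate a genuine obstacle here; the two potential pitfalls are purely bookkeeping. First, one must be careful to use orthogonality of $M_\pi$ twice (to pass between the similarity relation $M_\pi^T A_X M_\pi = A_X$ and the commutation relation $A_X M_\pi = M_\pi A_X$). Second, the reverse direction genuinely needs the orthogonal decomposition fact — not merely that each $V_{\lambda_i}$ is stabilized — because one must act eigenspace-by-eigenspace on an arbitrary vector. Both are dispatched by the symmetric-matrix fact stated above the lemma.
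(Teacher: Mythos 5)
Your proposal is correct and follows essentially the same route as the paper: both reduce the automorphism condition $M_\pi^T A_X M_\pi = A_X$ to the commutation relation $A_X M_\pi = M_\pi A_X$ via orthogonality of $M_\pi$, prove the forward direction by the eigenvector computation $A_X M_\pi v = \lambda M_\pi v$, and prove the converse by checking commutation on each eigenspace and extending to all of $\R^n$ using that the eigenvectors of the symmetric matrix $A_X$ span $\R^n$. Your write-up merely makes the spanning/decomposition step slightly more explicit than the paper does.
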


\begin{proof}
Suppose $\pi \in \Aut(G)$. Then $M_{\pi} A_X = A_X M_{\pi}$ and
therefore, for any eigenvector $v$ in eigenspace $W_i$ of eigenvalue
$\lambda_i$, $A_X M_{\pi} v = M_{\pi} A_X v = \lambda_i M_{\pi} v$
which shows that $M_{\pi}v \in W_i$.  Conversely, suppose $M_{\pi}$
preserves eigenspaces $W_i$ of $X$.  Then, for any $v \in W_i$, $A_X
M_{\pi} v = \lambda_i M_{\pi}x = M_{\pi} A_X v $.  Since eigenvectors
of the symmetric matrix $A_X$ span $\R^n$, this implies that
$A_XM_{\pi} = M_{\pi}A_X$.  Therefore, $\pi$ must be an automorphism
of $G$.  
\end{proof}

\begin{remark}
Our approach to solving Graph Isomorphism for bounded eigenvalue
multiplicity is based on a variation of this lemma, as described in
Proposition~\ref{prop1}. We first map the graph $G$ into a point set
$\P$ in the $n$-dimensional space $\R^n$. Then, we project $\P$ into
eigenspace $W_i$ of $G$, to obtain $\P_i$, for each eigenspace $W_i$.
It turns out that $\pi$ is an automorphism of $G$ if and only if
$\pi$, in its induced action is a congruence for the point set $\P_i$
for each eigenspace $W_i$. When the eigenspaces $W_i$ are of dimension
bounded by the parameter $k$, it creates the setting for application
of the $\cmp$-time algorithm for $\ggi$ \cite{ar}.
\end{remark}

Next, we recall some useful results about permutation group
algorithms. Further details can be found in the excellent text of
Ser\'ess \cite{Seress}. 

A \emph{permutation group} is a subgroup $G\le \sym(\Omega)$ of the
group of all permutations on a finite domain $\Omega$. A subset
$A\subseteq G$ of a permutation group $G$ is a \emph{generating set}
for $G$ if every element of $G$ can be expressed as a product of
elements of $A$. Every permutation group $G\le \sym(\Omega)$ has a
generating set of size $\log |G|\le n\log n)$ where
$n=|\Omega|$. Thus, algorithmically, a compact input representation
for permutation groups is by a generating set of size at most $n\log
n$. With this input representation, it turns out there several natural
permutation group problems have efficient polynomial-time algorithms.
A fundamental problem here is \emph{membership testing}: Given a
permutation $\pi\in\sym(\Omega)$ and permutation group $G$ by a
generating set, there is a polynomial-time algorithm (the
Schreier-Sims algorithm~\cite{Seress}) to check if in $\pi\in G$.  The
\emph{pointwise stabilizer} of a subset $\Delta \in \Omega$ in a
permutation group $G\le \sym(\Omega)$ is the subgroup
\[
G_{\{\Delta\}}=\{\pi \in G \mid \forall \gamma \in
\Gamma,~\pi(\gamma)=\gamma\}.
\]

Given a permutation group $G\le\sym(\Omega)$ by a generating set, a
generating set for $G_{\{\Delta\}}$ in polynomial time using ideas
from the Schreier-Sims algorithm~\cite{Seress}. More generally, suppose
$G\le\sym(\Omega)$ is given by a generating set and
$\sigma\in\sym(\Omega)$ is a permutation. The subset of permutations
$(G\sigma)_{\Delta\}}=\{\pi\in G\sigma\mid \pi(\gamma)=\gamma \forall
\gamma\in\Delta\}$ that pointwise fix $\Delta$ is a right coset
$G_{\{\pi^{-1}(\Delta\})}\tau$ and a generating set for
$G_{\{\pi^{-1}(\Delta\})}$ and such a coset representative $\tau$ can
be computed in polynomial time \cite{Seress}. We often use the
following group-theoretic fact.

\begin{fact}\label{union}
Let $H_i\le \sym(\Omega), 1\le i\le t$ and $\sigma_i\in\sym(\Omega),
1\le i\le t$, where each $H_i$ is given by a generating set $A_i$.
Suppose the union of the right cosets $\bigcup_{i=1}^t H_i\sigma_i$ is
a coset $G\sigma$ for some subgroup $G\le\sym(\Omega)$. Then, we can
choose the coset representative $\sigma$ to be $\sigma_1$ and the set
$\bigcup_{i=1}^t A_i\cup \{\sigma_i\sigma_1^{-1}\mid 2\le i\le t\}$ is
  a generating set for $G$.
\end{fact}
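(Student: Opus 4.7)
The plan is to establish both parts of the claim --- that $\sigma$ may be chosen as $\sigma_1$, and that the displayed set generates $G$ --- by direct coset manipulation using only the hypothesis $\bigcup_{i=1}^t H_i\sigma_i = G\sigma$. For the first part, since the identity belongs to $H_1$, we have $\sigma_1 \in H_1\sigma_1 \subseteq G\sigma$, which forces $G\sigma = G\sigma_1$. Consequently the hypothesis can be rewritten as $\bigcup_{i=1}^t H_i\sigma_i = G\sigma_1$, and it remains only to show that the displayed set generates $G$.

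Next I would verify by a double inclusion that the displayed set is a generating set for $G$. Applying the inclusion $H_i\sigma_i \subseteq G\sigma_1$ to the identity of $H_i$ yields $\sigma_i\sigma_1^{-1} \in G$ for every $i$. For any $h \in H_i$, the same inclusion gives $h\sigma_i = g\sigma_1$ for some $g \in G$, whence $h = g \cdot (\sigma_i\sigma_1^{-1})^{-1} \in G$; so each $A_i \subseteq H_i \subseteq G$, and the whole proposed set lies in $G$. For the reverse inclusion, pick an arbitrary $g \in G$. Then $g\sigma_1 \in G\sigma_1 = \bigcup_i H_i\sigma_i$, so $g\sigma_1 = h\sigma_i$ for some index $i$ and some $h \in H_i$. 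Rearranging gives $g = h \cdot (\sigma_i\sigma_1^{-1})$, which is a product of an element expressible via $A_i$ with the generator $\sigma_i\sigma_1^{-1}$ (the latter factor being the identity when $i=1$). Hence every $g \in G$ is generated by the displayed set.

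There is no genuine obstacle: the proof is routine double-inclusion bookkeeping. The only point requiring real care is the asymmetry of right cosets --- the translating elements must always appear on the right of the products, and one must remember that the identity of each $H_i$ is implicitly available to extract the representatives $\sigma_i\sigma_1^{-1}$ from the coset equation.
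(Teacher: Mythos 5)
Your proof is correct, and since the paper states Fact~\ref{union} without proof, your double-inclusion argument (using the identity of $H_1$ to replace $\sigma$ by $\sigma_1$, showing each $H_i$ and each $\sigma_i\sigma_1^{-1}$ lies in $G$, and then writing any $g\in G$ as $h\sigma_i\sigma_1^{-1}$) is exactly the routine verification the authors leave implicit. Nothing is missing.
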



\section{Algorithm Overview}\label{three}

Before we give an overview of the main result of this paper, we recall
the Point Set Congruence problem (also known as the geometric
isomorphism problem) $\ggi$ \cite{A+88,Aku,BK}.

Given two finite $n$-point sets $A$ and $B$ in $\Q^k$, we say $A$ and
$B$ are \emph{isomorphic} if there is a \emph{distance-preserving}
bijection between $A$ and $B$, where the distance is in the $l_2$
metric. The \emph{Geometric Graph Isomorphism} problem, denoted
$\ggi$, is to decide if $A$ and $B$ are isomorphic. This problem is
also known as \emph{Point Set Congruence} in the computational
geometry literature \cite{Aku,BK,A+88}. It is called ``Geometric Graph
Isomorphism'' by Evdokimov and Ponomarenko in \cite{epggi}, which we find
more suitable as the problem is closely related to Graph Isomorphism.
In \cite{ar} we obtained a $\cmp$ time algorithm for this problem.

We now begin with a definition.

\begin{definition}
Let $\P=\{p_1,p_2,\ldots,p_m\}\subset \Q^n$ be a finite point set.
A \emph{geometric automorphism} of $\P$ is a permutation $\pi$ of the
point set $\P$ such that for each pair of points $p_i,p_j\in\P$ we
have
\begin{eqnarray*}
\|p_i\|& = & \|\pi(p_i)\|, \textrm{ and}\\
\|p_i-p_j\| & = &\|\pi(p_i)-\pi(p_j)\|,
\end{eqnarray*}
where $p_i$ denotes, by abuse of notation, also the position vector of
the point $p_i$.
\end{definition}

Let $\P=\{p_1,p_2,\ldots,p_m\}\subset \Q^n$ be a finite point set such
that their set of position vectors $\{p_i\}$ spans $\R^n$. We refer to
$\P$ as a full-dimensional point set in $\R^n$. 

\begin{proposition}
Let $\P=\{p_1,p_2,\ldots,p_m\}\subset \Q^n$ be a full-dimensional
point set. Then there is a unique orthogonal $n\times n$ matrix
$A_\pi$ such that $A_\pi(p_i)=\pi(p_i)$ for each $p_i\in\P$. 
\end{proposition}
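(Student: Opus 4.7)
The plan is to derive everything from a single observation: since $\pi$ preserves norms and pairwise distances, the polarization identity
\[
\dotp{p_i}{p_j} = \frac{1}{2}\bigl(\|p_i\|^2 + \|p_j\|^2 - \|p_i-p_j\|^2\bigr)
\]
implies that $\pi$ also preserves inner products, i.e., $\dotp{\pi(p_i)}{\pi(p_j)} = \dotp{p_i}{p_j}$ for all $p_i,p_j\in\P$.

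Next, since $\P$ spans $\R^n$ by full-dimensionality, I would extract a basis $B=\{p_{i_1},\dots,p_{i_n}\}\subseteq\P$ and define $A_\pi$ as the unique linear map determined by $A_\pi(p_{i_j}) = \pi(p_{i_j})$ for $j=1,\dots,n$. The Gram matrix of the images $\{\pi(p_{i_j})\}$ equals that of $B$ (by the inner-product-preserving step above), which is invertible since $B$ is a basis; hence $\{\pi(p_{i_j})\}$ is also a basis, $A_\pi$ is invertible, and it preserves inner products on pairs of basis vectors. By bilinearity it then preserves inner products on all of $\R^n$, so $A_\pi^T A_\pi = I$, i.e., $A_\pi$ is orthogonal.

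The remaining task is to check that $A_\pi(p_k)=\pi(p_k)$ for every $p_k\in\P$, not only those in $B$. Writing $p_k = \sum_j c_j p_{i_j}$, the coefficients $c_j$ are the unique solution of the system $\dotp{p_k}{p_{i_l}} = \sum_j c_j \dotp{p_{i_j}}{p_{i_l}}$, $l=1,\dots,n$, because the Gram matrix of a basis is nonsingular. Applying $\pi$ and using inner-product preservation,
\[
\dotp{\pi(p_k)}{\pi(p_{i_l})} = \dotp{p_k}{p_{i_l}} = \sum_j c_j \dotp{\pi(p_{i_j})}{\pi(p_{i_l})},
\]
so $\pi(p_k) - \sum_j c_j \pi(p_{i_j})$ is orthogonal to the spanning set $\{\pi(p_{i_l})\}$ of $\R^n$, forcing $\pi(p_k) = \sum_j c_j \pi(p_{i_j}) = A_\pi(p_k)$. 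Uniqueness of $A_\pi$ is then immediate: any orthogonal matrix agreeing with $\pi$ on $\P$ agrees with $A_\pi$ on the spanning set $B$, so the two matrices coincide.

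There is no substantial obstacle here; the only subtlety is recognizing that distance- and norm-preservation together encode inner-product preservation via polarization, after which the argument is forced by standard linear algebra (Gram matrix invertibility and the fact that a vector orthogonal to a spanning set must vanish).
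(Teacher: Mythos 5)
Your proof is correct, and its core idea is the same as the paper's: both use the polarization identity to turn preservation of norms and pairwise distances into preservation of inner products $\dotp{p_i}{p_j}$, which is what makes the extension orthogonal. Where you differ is in how the extension is constructed. The paper simply asserts that, since $\P$ is full dimensional, $\pi$ ``extends linearly'' to a unique matrix $A_\pi$, and then verifies $\|A_\pi x\|=\|x\|$ by expanding in points of $\P$; when $m>n$ this well-definedness is not automatic---one needs to know that $\pi$ respects every linear relation $\sum_i \alpha_i p_i=0$ among the points---and the paper leaves that step implicit. You instead define $A_\pi$ on a basis $B\subseteq\P$ and then prove agreement with $\pi$ on the remaining points via the Gram-matrix argument (the vector $\pi(p_k)-\sum_j c_j\pi(p_{i_j})$ is orthogonal to a spanning set, hence zero); this is precisely the missing consistency check, so your write-up is, if anything, more complete than the paper's. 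You also make uniqueness explicit (any matrix agreeing with $\pi$ on the spanning set $B$ is determined), which the paper states without comment. The only cost is length: granted well-definedness, the paper's argument is a three-line computation, while yours is fully self-contained.
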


\begin{proof}
As $\P$ is full dimensional, we can define a unique matrix $A_\pi$ by
extending $\pi$ linearly to all of $\R^n$. $A_\pi$ can be shown to be
orthogonal as follows. Any vector $x \in \R^n$, $x$ is a linear
combination $\displaystyle\sum_{i=1}^n \sigma_i v_i$ where $v_i \in
\P$.  Then, $\|Ax\|^2 = \displaystyle\sum_{i,j} \sigma_i \sigma_j v_i
A^T A v_j $.  It suffices to observe that $ 2 v_i A^T A v_j =
\|A(v_i-v_j)\| ^2- \|Av_i\|^2 - \|Av_j\|^2 = \|v_i-v_j\|^2 - \|v_i\|^2
- \|v_j\|^2 = 2 v_i^T v_j$ for any vectors $v_i,v_j \in \P$. 
\end{proof}

The geometric automorphism problem is defined below:

\begin{problem}[$\gaut_k$] \leavevmode\newline
\noindent\textbf{Input:} A point set $\{p_1,p_2,\ldots,p_m\}\subset
  \Q^n$ and an orthogonal decomposition of $\R^n=W_1\oplus
  W_2\oplus\cdots\oplus W_r$, where $\dim(W_i)\le k$ and $W_i\perp W_j$
for all $i\ne j$.\\
\noindent\textbf{Parameter:} $k$.\\
\noindent\textbf{Output:} The subgroup $G\le S_m$ consisting of all
automorphisms $\pi$ of the input point set such that the orthogonal
matrix $A_\pi$ stabilizes each subspace $W_i$.
\end{problem}

The $\cmp$ time algorithm for $\evgi_k$ has the following three steps.

\begin{enumerate}
\item We give a polynomial-time reduction from $\evgi_k$ to
  $\gaut_{2k}$.

\item We apply the $\cmp$ time algorithm for $\ggi$ \cite{ar} to give
  a $\cmp$ time reduction from $\gaut_{2k}$ to a special hypergraph
  automorphism problem $\haut$.

\item We give a polynomial-time dynamic programming algorithm for
  $\haut$ by adapting the hypergraph isomorphism algorithm for bounded
  color classes in \cite{adkt}.
\end{enumerate}

\begin{proposition}\label{prop1}
There is a deterministic polynomial-time reduction from $\evgi_k$
with parameter $k$ to $\gaut_{2k}$ with parameter $2k$.
\end{proposition}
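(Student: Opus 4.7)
The plan is to use the classical disjoint-union device to turn isomorphism testing into automorphism-group computation, and then fit that automorphism problem into the $\gaut_{2k}$ template. Given two input graphs $X$ and $X'$ on $n$ vertices each, with eigenvalue multiplicities bounded by $k$, form their disjoint union $Y = X \sqcup X'$ on $2n$ vertices. It is well-known that $X \cong X'$ if and only if $\Aut(Y)$ contains an element mapping $V(X)$ onto $V(X')$, and this can be detected from any generating set of $\Aut(Y)$ in polynomial time by one orbit computation on a fixed vertex of $V(X)$. So it suffices to produce, in polynomial time, a $\gaut_{2k}$ instance whose output subgroup equals $\Aut(Y)$.

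The adjacency matrix $A_Y$ is block-diagonal with blocks $A_X$ and $A_{X'}$, hence its spectrum is the multiset union of the spectra of $A_X$ and $A_{X'}$; in particular every eigenvalue of $A_Y$ has multiplicity at most $2k$. I would compute the spectral decomposition of $A_Y$ over $\Q$ in polynomial time to obtain an orthogonal decomposition $\R^{2n} = W_1 \oplus \cdots \oplus W_r$ with $\dim(W_i) \le 2k$ for every $i$. For the point set, take $\P = \{e_1,\dots,e_{2n}\} \subset \Q^{2n}$, the standard basis, which is manifestly full-dimensional. The $\gaut_{2k}$ instance is then $(\P, \{W_1, \dots, W_r\})$.

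Correctness reduces to identifying the returned subgroup $G \le S_{2n}$ with $\Aut(Y)$. Since $\|e_i\| = 1$ for all $i$ and $\|e_i - e_j\| = \sqrt{2}$ for all $i \ne j$, every permutation $\pi \in S_{2n}$ is a geometric automorphism of $\P$, and by the preceding proposition the unique orthogonal matrix realizing $\pi$ on the full-dimensional set $\P$ is the permutation matrix $M_\pi$. Thus the condition defining $G$ — namely that $A_\pi = M_\pi$ stabilize each $W_i$ — is, by Lemma~\ref{prelim1}, exactly the condition $\pi \in \Aut(Y)$. Combined with the orbit test described above, this gives the required deterministic polynomial-time reduction.

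The only content-bearing point in the argument is the multiplicity bound $\dim(W_i) \le 2k$ for $A_Y$: this doubling is what forces the target parameter to be $2k$ rather than $k$, and it is the unavoidable price of using disjoint union to convert isomorphism into automorphism. The remaining ingredients — rational spectral decomposition of a symmetric integer matrix, orbit computation from a generating set, and the bookkeeping needed to translate between a permutation of basis vectors and a permutation of $[2n]$ — are standard polynomial-time tasks, so no step presents a genuine obstacle beyond verifying these routine points.
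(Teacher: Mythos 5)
Your reduction is correct and follows the paper's overall strategy — form the disjoint union $Y=X\sqcup X'$, observe that eigenvalue multiplicities at most double, and use Lemma~\ref{prelim1} to translate ``automorphism'' into ``stabilizes every eigenspace'' — but your choice of point set is genuinely different and, in fact, simpler. The paper's point set consists of the $2n$ standard basis vectors \emph{together with} one point $e_i+e_j$ for every edge $\{i,j\}$, so that the geometric automorphisms of $\P$ by themselves already coincide with $\Aut(Y)$ (the eigenspace-stabilization condition in the $\gaut_{2k}$ output is then automatically satisfied, again by Lemma~\ref{prelim1}); there the graph is encoded in the geometry of the points. You instead take $\P$ to be just the standard basis, note that every permutation of $\P$ is then a geometric automorphism whose unique orthogonal extension is the permutation matrix $M_\pi$, and let the eigenspace-stabilization condition carry the entire graph structure via Lemma~\ref{prelim1}. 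Both constructions produce valid $\gaut_{2k}$ instances whose output group is exactly $\Aut(Y)$; yours has the mild advantages of a smaller point set and of the output group acting directly on the vertices rather than on vertices plus edges.

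Two details in your write-up need repair, though neither is fatal. First, deciding $X\cong X'$ ``by one orbit computation on a fixed vertex of $V(X)$'' is sound only for connected graphs: if some $\sigma\in\Aut(Y)$ maps a vertex $v\in V(X)$ into $V(X')$, it only maps the connected component of $v$ isomorphically onto a component of $X'$, which does not imply $X\cong X'$ (e.g.\ the cospectral non-isomorphic pair $X=C_4\cup K_1\cup P_3$ and $X'=K_{1,4}\cup P_3$ share a $P_3$ component, so a vertex of that component of $X$ lies in the same orbit as a vertex of $X'$). The standard fix is to use the orbits to determine which components of $X$ and $X'$ are isomorphic and check that every isomorphism class contains equally many components from each side; note the paper is equally terse here, merely asserting one can check for $\pi\in\Aut(Y)$ with $\pi(X_1)=X_2$. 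Second, the spectral decomposition of $A_Y$ is not in general over $\Q$, since eigenvalues of an integer symmetric matrix can be irrational; as in the paper's footnote, one computes the eigenvalues and bases of the eigenspaces to polynomially many bits of accuracy, which suffices for the algorithm.
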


\begin{proof}
Let $X=X_1\cup X_2$ be the disjoint union of the input instance
$(X_1,X_2)$ of $\evgi_k$. The adjacency matrix $A_X$ of $X$ is block
diagonal and has the adjacency $A_{X_1}$ and $A_{X_2}$ as its two
blocks along the diagonal. Thus, $A_X$ has the same set of eigenvalues
as $A_{X_1}$ and $A_{X_2}$, and the multiplicity at most
doubles.\footnote{We can assume w.l.o.g.\ that $A_{X_1}$ and $A_{X_2}$
have the same eigenvalues with the same multiplicity as we can check
that in polynomial time.} Clearly, we can decide whether $X_1$ and $X_2$ 
are isomorphic by computing $\Aut(X)$ and checking
if there exists a $\pi \in \Aut(X)$ such that $\pi(X_1)=X_2$ and 
vice-versa.

Furthermore, by Lemma~\ref{prelim1} a permutation $\pi\in\sym(V(X))$
is an automorphism of $X$ if and only if $\pi$ (considered as a linear
map on $\R^{2n}$) preserves each eigenspace of $X$. Let
$\lambda_1,\lambda_2,\ldots,\lambda_r$ be the $r$ eigenvalues of $X$
and $W_1,W_2,\ldots,W_r$ be the corresponding eigenspaces.\footnote{By
  applying suitable numerical methods we can compute each $\lambda_i$
  and basis for each $W_i$ to polynomially many bits of accuracy in
  polynomial time. This suffices for our algorithms.}

Next, we compute the point set $\P=\{p_1,p_2,\ldots,p_{m+2n}\}$
corresponding to the graph $X=(V,E)$, where $|V|=2n$ and $|E|=m$. The
points $p_1,p_2,\ldots,p_{2n}$ are defined by the elementary
$n$-dimensional vectors $e_i\in\R^{2n}, 1\le i\le 2n$. The points
$p_{2n+1},\ldots,p_{2n+m}$ are defined by vectors corresponding to the
edges in $E$ as follows: For each edge $e=\{i,j\}\in E$ the
corresponding point has $1$ in the $i^{th}$ and $j^{th}$ locations and
zeros elsewhere. 

We claim that $\pi \in \Aut(X)$ iff $\pi$ is a geometric automorphism of $\P$. 
Let $\pi$ be any permutation on the vertex set $V(X)$. The action
of the permutation $\pi$ extends (uniquely) to the edge set, and hence
to the point set $\P$ as well. If $\pi\in\Aut(X)$ then, clearly, $\pi$
is a geometric automorphism for the point set $\P$. Conversely, if
$\pi$ is geometric automorphism of the point set $\P$ then it
stabilizes the subset of points $\{p_1,\ldots,p_{2n}\}$ encoding
vertices and the subset $\{p_{2n+1},\ldots,p_{2n+m}\}$ encoding edges
which means $\pi\in\Aut(X)$. This completes the reduction and its
correctness proof.
\end{proof}

\section{The Geometric Automorphism Problem $\gaut_k$}\label{four}

In this section, we introduce some necessary definitions 
and state a useful characterization of a
geometric isomorphism of a set of points.
This will lead to our $\cmp$ time algorithm for $\gaut_k$
which yields the main result for $\evgi_k$ by Proposition~\ref{prop1}.

Let $(\P,W_1,W_2,\ldots,W_r)$ be the instance of
$\gaut_k$. W.l.o.g.\ we can assume that $\P$ is full dimensional.
Otherwise, we can cut down the dimensional of the ambient space $\R^n$
to the dimension of the point set $\P$.

We can assume w.l.o.g.\ that each $W_\ell$ is given by a basis
$u_{\ell 1},u_{\ell 2},\ldots,u_{\ell k_\ell}$ where $k_\ell\le k$ for
all $\ell \in[r]$. 


Each point $p_i\in\P$ has its projection $\proj_\ell(p_i)$ in the
subspace $W_\ell$ defining the projection $\P_\ell=\proj_\ell(\P)$
inside $W_\ell$ of the point set $\P$. For each $p_i\in\P$ we
can uniquely express it as
\[
p_i=\sum_{\ell=1}^r \proj_\ell(p_i).
\] 

Thus we have the projections $\P_1,\P_2,\ldots,\P_r$ of the input
point set $\P$ into the orthogonal subspaces $W_1,W_2,\ldots,W_r$,
respectively. These projections naturally define equivalence
relations on the point set $\P$ as follows.

\begin{definition}
Two points $p_i,p_j\in\P$ are \emph{$(\ell)$-equivalent} if
$\proj_\ell(p_i)=\proj_\ell(p_j)$, and they are
\emph{$[\ell]$-equivalent} if $\proj_t(p_i)=\proj_t(p_j), 1\le t\le
\ell$.
\end{definition}

Since $\R^n=W_1\oplus W_2\oplus\cdots \oplus W_r$ we observe the
following.

\begin{fact}
For any two $p_i,p_j\in\P$ we have $p_i=p_j$ iff $p_i$ and $p_j$ are
$[r]$-equivalent.
\end{fact}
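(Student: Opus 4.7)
The plan is to use the orthogonal decomposition $\R^n = W_1 \oplus W_2 \oplus \cdots \oplus W_r$ together with the unique decomposition of any vector into its projections, which was essentially already stated in the paragraph preceding the fact.

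The forward direction ($p_i = p_j$ implies $[r]$-equivalence) is immediate from the fact that each projection $\proj_t$ is a well-defined function on $\R^n$, so equal inputs yield equal outputs $\proj_t(p_i) = \proj_t(p_j)$ for every $t \in [r]$. For the reverse direction, I would invoke the identity
\[
p_i = \sum_{t=1}^r \proj_t(p_i), \qquad p_j = \sum_{t=1}^r \proj_t(p_j),
\]
which holds because $W_1, \ldots, W_r$ form an orthogonal decomposition of $\R^n$ and therefore span $\R^n$. Since $[r]$-equivalence asserts that $\proj_t(p_i) = \proj_t(p_j)$ for each $1 \le t \le r$, summing these termwise yields $p_i = p_j$.

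There is no real obstacle here; the statement is a direct restatement of the uniqueness of the decomposition with respect to the orthogonal direct sum. The only thing worth remarking on is that orthogonality (rather than mere spanning) is what guarantees uniqueness of the projections, ensuring the map $v \mapsto (\proj_1(v), \ldots, \proj_r(v))$ is a bijection between $\R^n$ and $W_1 \times \cdots \times W_r$.
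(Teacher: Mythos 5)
Your proof is correct and matches the paper's own (implicit) justification: the paper states this fact as an immediate consequence of the decomposition $p_i=\sum_{\ell=1}^r\proj_\ell(p_i)$ arising from $\R^n=W_1\oplus\cdots\oplus W_r$, which is exactly the identity you invoke.
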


In other words, the common refinement of the $(\ell)$-equivalence
relations, $1\le \ell \le r$, is the identity relation on $\P$, and
the equivalence classes of this refinement are the singleton
sets. Given a permutation $\pi$ on the point set $\P$ we can ask
whether it induces an automorphism on the projection $\P_\ell$ in the
following sense.

A subset $\Delta\subset \P$ of points is an $(\ell)$-equivalence class
of $\P$ if and only if for some point $p\in\P_\ell$ we have
$\Delta=\proj_\ell^{-1}(p)$. Thus, each point in the projected set
$\P_\ell$ represents an $(\ell)$-equivalence class. We say that
permutation $\pi\in\sym(\P)$ \emph{respects} $\P_\ell$ iff for each
$(\ell)$-equivalence class $\Delta\subset \P$ the subset $\pi(\Delta)$
is an $(\ell)$-equivalence class. Suppose $\pi\in\sym(\P)$ is a
permutation that respects $\P_\ell$. Then $\pi$ induces a permutation
$\pi_\ell$ on the point set $\P_\ell$ as follows: for each
$p\in\P_\ell$ its image is
\[
\pi_\ell(p)=\proj_\ell(\pi(\proj_\ell^{-1}(p))).
\]

\begin{definition}
 A permutation $\pi\in\sym(\P)$ is said to be an \emph{induced
   geometric automorphism} on the projection $\P_\ell\subset W_\ell$
 if $\pi$ respects $\P_\ell$ and $\pi_\ell$ is a geometric
 automorphism of the point set $\P_\ell$.
\end{definition}

\begin{lemma}\label{lem-aut1}
Let $(\P,W_1,W_2,\ldots,W_r)$ be an instance of $\gaut_k$ and $\P$ be
full dimensional in $\R^n$. Let $\pi$ be a permutation on $\P$. Then
$\pi$ is a geometric automorphism of $\P$ such that
$A_\pi(W_\ell)=W_\ell$ for each $\ell\in [r]$ if and only if $\pi$ is
an induced automorphism of each $\P_\ell, 1\le \ell \le r$.
\end{lemma}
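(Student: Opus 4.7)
The plan is to prove both directions by exploiting the single key identity
\[
\proj_\ell \circ A_\pi = A_\pi \circ \proj_\ell \quad \text{for each } \ell,
\]
which holds whenever $A_\pi$ is orthogonal and stabilizes every $W_\ell$. The reason is that the decomposition $\R^n = W_1 \oplus \cdots \oplus W_r$ is orthogonal, so stabilizing each $W_\ell$ forces $A_\pi$ to also stabilize each orthogonal complement $W_\ell^\perp = \bigoplus_{t \ne \ell} W_t$, and therefore to commute with the orthogonal projection onto $W_\ell$.

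For the forward direction, I would assume $\pi$ is a geometric automorphism with $A_\pi(W_\ell) = W_\ell$ and apply the identity above pointwise to obtain $\proj_\ell(\pi(p_i)) = A_\pi(\proj_\ell(p_i))$. This immediately implies that $\pi$ respects every $(\ell)$-equivalence class, since $\proj_\ell(p_i) = \proj_\ell(p_j)$ forces the same equality on the images. Moreover, the induced map $\pi_\ell$ on $\P_\ell$ is precisely the restriction of the orthogonal map $A_\pi|_{W_\ell}$ to $\P_\ell$, so it preserves norms and pairwise distances and is therefore a geometric automorphism of $\P_\ell$.

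For the reverse direction, I would first note that since $\P$ is full-dimensional in $\R^n$, each $\P_\ell = \proj_\ell(\P)$ spans $W_\ell$; hence $\P_\ell$ is full-dimensional inside $W_\ell$. By the earlier proposition applied inside $W_\ell$, the induced geometric automorphism $\pi_\ell$ extends uniquely to an orthogonal map $A_{\pi_\ell} : W_\ell \to W_\ell$. I would then glue these into $A_\pi := A_{\pi_1} \oplus \cdots \oplus A_{\pi_r}$, which is orthogonal because the $W_\ell$'s are mutually orthogonal and stabilizes each $W_\ell$ by construction. The final step is to check $A_\pi(p_i) = \pi(p_i)$ for every $p_i \in \P$: decompose $p_i = \sum_\ell \proj_\ell(p_i)$ and use, termwise, $A_{\pi_\ell}(\proj_\ell(p_i)) = \pi_\ell(\proj_\ell(p_i)) = \proj_\ell(\pi(p_i))$; summing over $\ell$ and invoking $\pi(p_i) = \sum_\ell \proj_\ell(\pi(p_i))$ yields the claim. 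Since $A_\pi$ is orthogonal, $\pi$ is therefore a geometric automorphism of $\P$ whose associated matrix stabilizes each $W_\ell$.

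The main technical point throughout is the compatibility of the orthogonal projections with $A_\pi$: in the forward direction it is what lets us transfer the global orthogonal action to each $\P_\ell$, and in the reverse direction it is what guarantees that the \emph{glued} map $\bigoplus_\ell A_{\pi_\ell}$ actually realises $\pi$ on $\P$ rather than some other permutation consistent with the individual $\pi_\ell$. Everything else is routine orthogonal-decomposition bookkeeping, with the full-dimensionality of $\P$ ensuring uniqueness of each $A_{\pi_\ell}$ via the earlier proposition.
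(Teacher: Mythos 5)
Your proof is correct, and its forward direction coincides with the paper's: both use that an orthogonal $A_\pi$ stabilizing every $W_\ell$ (hence also $W_\ell^\perp=\bigoplus_{t\ne\ell}W_t$) commutes with $\proj_\ell$, giving $\proj_\ell(\pi(p_i))=A_\pi(\proj_\ell(p_i))$ and thus the induced geometric automorphisms on each $\P_\ell$. The converse is where you diverge slightly, and in a useful way: the paper takes the linear extension $A_\pi$ of $\pi$ (implicitly using full-dimensionality of $\P$), argues $A_\pi(W_\ell)=W_\ell$ from full-dimensionality of $\P_\ell$ in $W_\ell$, and then verifies distance preservation by a Pythagoras computation over the orthogonal components; you instead apply the earlier proposition inside each $W_\ell$ to get the orthogonal maps $A_{\pi_\ell}$, glue them into the block map $\bigoplus_\ell A_{\pi_\ell}$, and check termwise that this map realizes $\pi$ on $\P$, so orthogonality immediately gives that $\pi$ is a geometric automorphism with the required stabilization. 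Your route makes explicit the existence and orthogonality of $A_\pi$ before $\pi$ is known to be a geometric automorphism, a point the paper's proof glosses over, at the cost of invoking the extension proposition $r$ times; the paper's Pythagoras argument is shorter but leans on the reader to supply that justification. One small point common to both write-ups: the forward argument literally shows each $(\ell)$-class is mapped into an $(\ell)$-class; equality (so that $\pi_\ell$ is a genuine permutation of $\P_\ell$) follows by applying the same reasoning to $\pi^{-1}$, whose matrix $A_\pi^{-1}$ also stabilizes each $W_\ell$, or by finiteness — worth a sentence if you want the argument airtight.
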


\begin{proof}
For the forward direction, suppose $\pi$ is a geometric automorphism
of $\P$ such that $A_\pi(W_\ell)=W_\ell$ for each $W_\ell$. We claim
that $\pi$ is an induced automorphism of $\P_\ell$ for each $\ell$.

For any point $p_i\in\P$ we can write 
\[
p_i=\proj_\ell(p_i) + u,
\]
where $u$ is a vector in $W_\ell^\perp$. Since $A_\pi$ stabilizes each
$W_i$, it follows by linearity that
\[
\proj_\ell(A_\pi(p_i))=A_\pi(\proj_\ell(p)).
\]
Hence $A_\pi(\P_\ell)=\P_\ell$ which implies $\pi$ is an induced
automorphism of $\P_\ell$ for each $\ell$.

Conversely, suppose a permutation $\pi$ on $\P$ is an induced
automorphism of each $\P_\ell, 1\le \ell \le r$. Since each $\P_\ell$
is a full-dimensional point set in $W_\ell$, it follows that
$A_\pi(W_\ell)=W_\ell$ for each $\ell$. To see that $\pi$ is a
geometric automorphism of $\P$, let $p_i, p_j\in\P$. We can write
$p_i=\sum_{\ell=1}^r\proj_\ell(p_i)$ and
$p_j=\sum_{\ell=1}^r\proj_\ell(p_j)$. By linearity, we have
$A_\pi(p_i)=\sum_\ell A_\pi(\proj_\ell(p_i))$ and
$A_\pi(p_j)=\sum_\ell A_\pi(\proj_\ell(p_j))$. Hence, by Pythagoras
theorem we have
\begin{eqnarray*}
\|A_\pi(p_i)-A_\pi(p_j)\|^2 & = &
\sum_{\ell=1}^r\|A_\pi(\proj_\ell(p_i))-A_\pi(\proj_\ell(p_j))\|^2\\
& = & \sum_{\ell=1}^r\|\proj_\ell(p_i))-\proj_\ell(p_j))\|^2\\
& = & \|p_i - p_j\|^2,
\end{eqnarray*}
where the third line above follows because $\pi$ is an induced
automorphism of each $\P_\ell$.
\end{proof}


\section{The Hypergraph Automorphism Problem}\label{five}

By Lemma~\ref{lem-aut1} it follows that $\Aut(\P)$ is the group of all
$\pi\in\sym(\P)$ such that $\pi$ is an induced automorphism of each
$\P_\ell, 1\le \ell \le r$. In this section we describe the algorithm
for computing a generating set for $\Aut(\P)$ in $\cmp$ time.

The first step is to reduce $\gaut_k$ in $\cmp$ time to a
hypergraph automorphism problem defined below:

\begin{problem}[$\haut$]\leavevmode\newline
\noindent\textbf{Input:} A hypergraph $X=(V,E)$ and a partition of the
vertex set into color classes $V=V_1\cup V_2\cup \dots \cup V_r$, and
subgroups $G_i\le \sym(V_i), 1\le i\le r$, where each $G_i$ is given
as an explicit list of permutations.\\

\noindent\textbf{Output:} A generating set for $\Aut(X)\cap G_1\times
G_2\times\dots \times G_r$.
\end{problem}

We will give a polynomial-time algorithm for this problem based on a
dynamic programming strategy as used in \cite{adkt}. Before that we
will show that $\gaut_k$ is reducible to $\haut$ in $\cmp$ time.
Combining the two we will obtain the $\cmp$ time algorithm for
$\gaut_k$.

\begin{theorem}\label{geom-gi}
There is a $\cmp$ time reduction from the $\gaut_k$ problem to
$\haut$.
\end{theorem}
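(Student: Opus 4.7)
The plan is to implement the three-stage reduction suggested by Lemma~\ref{lem-aut1}: solve each low-dimensional projection separately with the $\ggi$ algorithm, then glue the local automorphism groups together with a partite hypergraph whose hyperedges record which projected points come from a common point of $\P$.

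First I would preprocess the input as in Section~\ref{four}, reducing to the case that $\P$ is full-dimensional, and compute the projected sets $\P_\ell = \proj_\ell(\P) \subset W_\ell$ for each $\ell \in [r]$. Since $\dim W_\ell \le k$, the \cite{ar} algorithm computes the geometric automorphism group $G_\ell := \Aut(\P_\ell) \le \sym(\P_\ell)$ in $\cmp$ time. The crucial quantitative point here is that $|G_\ell|$ is itself bounded by $k^{O(k)}$: a geometric automorphism is the orthogonal transformation of $W_\ell$ determined by its action on any reduced basis drawn from the full-dimensional set $\P_\ell$, and there are $k^{O(k)}$ such bases. Hence each $G_\ell$ can be enumerated as an explicit list within the $\cmp$ budget, as required by the $\haut$ input format.

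Next I would construct the $\haut$ instance $(X, V_1, \ldots, V_r, G_1, \ldots, G_r)$. Take $V_\ell := \P_\ell$ (labelled copies, so that the $V_\ell$ are pairwise disjoint), $V = V_1 \sqcup \cdots \sqcup V_r$, and for each point $p \in \P$ introduce one hyperedge
\[
e_p := \{\proj_1(p), \proj_2(p), \ldots, \proj_r(p)\} \subseteq V,
\]
which is ``rainbow'' (one vertex per color class). By the fact that two points of $\P$ coincide iff they are $[r]$-equivalent, the assignment $p \mapsto e_p$ is injective, so $|E| = |\P|$ and the hypergraph has polynomial size. The total construction time is dominated by the $r$ calls to \cite{ar}, i.e., $\cmp$.

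The correctness claim to verify is that $\Aut(X) \cap (G_1 \times \cdots \times G_r)$ is, under the natural correspondence, exactly the output group of $\gaut_k$ on $(\P, W_1, \ldots, W_r)$. A tuple $(\sigma_1, \ldots, \sigma_r) \in \prod_\ell G_\ell$ sends each $e_p$ to a hyperedge iff for every $p \in \P$ there is a $q \in \P$ with $\proj_\ell(q) = \sigma_\ell(\proj_\ell(p))$ for all $\ell$; such a $q$ is unique by full-dimensionality, so this defines a permutation $\pi \in \sym(\P)$ whose induced action on each $\P_\ell$ is $\sigma_\ell \in \Aut(\P_\ell)$. Hence $\pi$ is an induced geometric automorphism of every $\P_\ell$, and by Lemma~\ref{lem-aut1} this is precisely the $\gaut_k$ condition. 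The generating set returned by the subsequent $\haut$ algorithm then translates, coordinate-by-coordinate, into a generating set for the $\gaut_k$ solution. The main nontrivial obstacle is the bound $|G_\ell| \le k^{O(k)}$ controlling the explicit-list size; once that is in hand, everything else is routine verification.
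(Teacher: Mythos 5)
Your construction is exactly the paper's: project $\P$ into each $W_\ell$, compute $G_\ell=\Aut(\P_\ell)$ with the $\ggi$ algorithm of \cite{ar}, take color classes $V_\ell=\P_\ell$ with one rainbow hyperedge $e_p=\{\proj_1(p),\ldots,\proj_r(p)\}$ per point $p\in\P$, and justify correctness via Lemma~\ref{lem-aut1}; your verification that $\Aut(X)\cap G_1\times\cdots\times G_r$ corresponds to the $\gaut_k$ output group (using injectivity of $p\mapsto e_p$ from $[r]$-equivalence) is the same argument the paper gives, only spelled out in more detail.

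The genuine problem is the step you yourself flag as the main obstacle: the claim $|G_\ell|\le k^{O(k)}$, justified by saying an automorphism of $\P_\ell$ is determined by its action on a basis drawn from $\P_\ell$ and that there are $k^{O(k)}$ such bases. Both halves fail. The number of ordered bases (equivalently, of candidate images of one fixed basis) drawn from an $m$-point set in dimension $k$ is $m^{\Theta(k)}$, not $k^{O(k)}$, so determination-by-a-basis only yields a bound like $|\P_\ell|^{k}$; and no bound depending on $k$ alone can be true, since a regular $m$-gon in the plane ($k=2$) already has $2m$ geometric automorphisms, and placing regular polygons in several mutually orthogonal planes drives the group order up to $m^{\Theta(k)}$. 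The paper does not attempt such a $k$-only bound: it relies on the guarantee of \cite{ar} that $\Aut(\P_\ell)$ can be \emph{explicitly listed} within the $\cmp$ time budget, so the list length is bounded by that running time (input-size factors included), which is what the $\haut$ format and the polynomial dependence of Theorem~\ref{haut} on $\max_\ell|G_\ell|$ require. Replace your counting argument by this appeal to the cited algorithm's output guarantee; with that repair, the rest of your reduction goes through exactly as in the paper.
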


\begin{proof}
Let $(\P,W_1,W_2,\ldots,W_r)$ be an instance of $\gaut_k$. In order to
compute $\Aut(\P)$ we first compute each $\P_\ell, \ell\in [r]$. Then,
since $W_\ell$ is $k$-dimensional we can compute the geometric
automorphisms $\Aut(\P_\ell)$ in $\cmp$ time by applying the main
result of \cite{ar}. Indeed, $\Aut(\P_\ell)$ can be explicitly listed
down in $\cmp$ time, also implying that $|\Aut(\P_\ell)|$ is bounded
by $\cmp$. Now, we construct a hypergraph instance $X=(V,E)$ of
$\haut$ as follows: The vertex set $V$ is the disjoint union
$V=\P_1\cup\dots \P_r$, and the explicitly listed groups
$G_\ell=\Aut(\P_\ell), \ell\in [r]$. For each point $p_i\in\P$ we
include a hyperedge $e_p\in E$, where
$e_p=\{\proj_1(p_i),\proj_2(p_i),\ldots,\proj_r(p_i)\}$.  Since the
edges of $X$ encode points in $\P$, the induced action of the
automorphism group $\Aut(X)\cap G_1\times G_2\times\dots\times G_r$ on
the edges of $X$ is in one-to-one correspondence with $\Aut(\P)$ by
Lemma~\ref{lem-aut1}. Hence, we can obtain a generating set for
$\Aut(\P)$.  Clearly, the reduction runs in time $\cmp$. 
\end{proof}

In the polynomial-time algorithm for $\haut$ we will use as
subroutine a polynomial-time algorithm for the following
simple coset intersection problem.

\begin{problem}[Restricted Coset Intersection]
\leavevmode\newline
\noindent\textbf{Input:} Let $V=V_1\uplus V_2\uplus\dots \uplus V_r$
be a partition of the domain into color classes and $G_i\le\sym(V_i)$
be an explicitly listed subgroup of permutations on $V_i$, $1\le i\le
r$. Let $H$ and $H'$ be subgroups of the product group
$G_1\times\dots\times G_r$, where $H$ and $H'$ are given by generating
sets as input. Let $\pi, \pi'\in G_1\times\dots\times G_r$.\\
\noindent\textbf{Output:} The coset intersection $H\pi\cap H'\pi'$
which, if nonempty, is given by a generating set for $H\cap H'$ and
a coset representative $\pi''\in H\pi\cap H'\pi'$.
\end{problem}

\begin{lemma}\label{inter}
The above restricted coset intersection problem has a polynomial-time
algorithm.
\end{lemma}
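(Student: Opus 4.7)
My plan is induction on the number of color classes $r$. The base case $r=1$ is immediate: $H,H'\le G_1$ sit inside the explicitly listed group $G_1$, so both $H\cap H'$ and $H\pi\cap H'\pi'$ can be read off by enumeration in $G_1$.

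For $r\ge 2$, I would split off $V_1$ and reduce to a restricted coset intersection problem on $V_2\cup\cdots\cup V_r$. Using the Schreier--Sims algorithm on $H$ and $H'$, with a base whose initial entries cover $V_1$, one obtains in polynomial time the first-coordinate projections $H_1=\proj_1(H)\le G_1$ and $H'_1=\proj_1(H')\le G_1$ as explicit lists, along with generating sets for the pointwise $V_1$-stabilizers $H^{(1)},H'^{(1)}$, which are subgroups of $G_2\times\cdots\times G_r$. Applying the inductive hypothesis to $H^{(1)}$ and $H'^{(1)}$ (with trivial cosets) yields a generating set for $(H\cap H')^{(1)}=H^{(1)}\cap H'^{(1)}$.

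To complete a generating set for $H\cap H'$, one determines the image $M_1=\proj_1(H\cap H')\le H_1\cap H'_1$, together with a lift $\rho_{\sigma_1}\in H\cap H'$ for each $\sigma_1\in M_1$. For each candidate $\sigma_1\in H_1\cap H'_1$ (a polynomial-sized explicit list since $G_1$ is given explicitly), lift $\sigma_1$ via the strong generating sets to $\tau\in H$ and $\tau'\in H'$ with $\tau|_{V_1}=\tau'|_{V_1}=\sigma_1$, and invoke the inductive algorithm on the smaller instance $H^{(1)}\tau\cap H'^{(1)}\tau'$ to decide nonemptiness and obtain a lift. Fact~\ref{union} then combines the generators of $(H\cap H')^{(1)}$ with the $\rho_{\sigma_1}$'s into a generating set for $H\cap H'$. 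A representative $\pi''\in H\pi\cap H'\pi'$ is obtained analogously: compute $H_1\pi_1\cap H'_1\pi'_1$ in $G_1$ by enumeration; if nonempty, it is a single $M_1$-coset, so testing one candidate $\sigma_1$ per coset and recursing produces $\pi''$.

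The main obstacle will be the running-time analysis. A naive recurrence of the form $T(r)\le|G_1|\cdot T(r-1)$ blows up exponentially in $r$; to beat this one must exploit the coset structure of $M_1$ inside $H_1\cap H'_1$, so that once $M_1$ is known, the search for a coset representative requires only polynomially many trials and, amortized across the recursion, the total number of recursive calls remains polynomial in the input size $|V|+\sum_i|G_i|$ and the generating-set sizes. This careful amortization is the delicate step I would have to verify.
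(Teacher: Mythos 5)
There is a genuine gap, and it is exactly the one you flag yourself: the running time. Your scheme peels off $V_1$ and then, for each candidate $\sigma_1$ in $H_1\cap H_1'$ (up to $|G_1|$ of them), makes a full recursive call on an instance with $r-1$ color classes. Each of those calls in turn branches over up to $|G_2|$ candidates, and so on, giving $\prod_{i=1}^r|G_i|$ leaf calls in the worst case --- e.g.\ $2^r$ when every $|G_i|=2$, which is exponential in the input size (the input is only linear in $r$). The ``amortization exploiting the coset structure of $M_1$'' that you invoke to collapse this is not an argument; knowing that the successful $\sigma_1$'s form a coset of $M_1$ does not by itself reduce the number of deep recursive calls needed to find $M_1$ or a representative, and no mechanism is given that avoids the multiplicative branching. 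As written, the proposal proves correctness of an exponential-time procedure, not the lemma.

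The paper avoids branching entirely with a diagonal trick: it lets $H\times H'$ act componentwise on $\bigcup_{i=1}^r V_i\times V_i$, observes that $(h,h')\in(H\times H')(\pi,\pi')$ fixes every diagonal $D_i=\{(\alpha,\alpha)\mid\alpha\in V_i\}$ setwise iff $h=h'\in H\pi\cap H'\pi'$, and then computes, sequentially for $i=1,\dots,r$, the subcoset stabilizing $D_i$. The point that makes this polynomial is the restriction you did not use: elements restricted to $V_i\times V_i$ lie in the explicitly listed group $G_i\times G_i$, so the orbit of $D_i$ has size at most $|G_i|^2$, and stabilizing $D_i$ is a point-stabilizer computation on a polynomially bounded domain, done by Schreier--Sims in polynomial time. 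Since the $r$ stabilizations are performed one after another on a single coset (no case analysis over group elements), the whole computation is polynomial. If you want to salvage your class-by-class viewpoint, you should replace ``recurse per candidate $\sigma_1$'' by ``maintain one coset of pairs in $H\times H'$ and cut it down by one diagonal at a time,'' which is precisely the paper's algorithm.
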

 
\begin{proof}
We give a sketch of the algorithm which is a simple application of the
classical Schreier-Sims algorithm (mentioned in Section~\ref{two}):
given a permutation group $G\le\sym(\Omega)$ by a generating set and
another permutation $\pi\in\sym(\Omega)$, for any point
$\alpha\in\Omega$ the subcoset of $G\pi$ that fixes the point $\alpha$
can be computed in time polynomial in $|\Omega|$ and the size of the
generating set for $G$. See, e.g.\ \cite{Seress} for details.

In order to compute the intersection $H\pi\cap H'\pi'$, we consider
the product group $H\times H'$ acting on the set
$\Delta=\bigcup_{i=1}^r V_i\times V_i$ component-wise. The permutation
pair $(\pi,\pi')$ too defines a permutation on the set $\Delta$. We
consider now the coset $(H\times H')(\pi,\pi')$ of the group $H\times
H'$. Define the diagonal sets 
\[
D_i=\{(\alpha,\alpha)\mid \alpha\in V_i\}, 1\le i\le r.
\]

The following claim is immediate from the definitions.

\begin{claim}
A pair $(h,h')\in (H\times H')(\pi,\pi')$ maps each $D_i$ to $D_i$ if
and only if $h=h'$ and $h\in H\pi\cap H'\pi'$.
\end{claim}

Thus, in order to compute the coset intersection it suffices to
compute the subcoset 
\[
\{(h,h')\in (H\times H')(\pi,\pi')\mid (h,h')(D_i)=(D_i) 1\le i\le r\}
\]
of the coset $(H\times H')(\pi,\pi')$. Notice that $D_i\subset
V_i\times V_i$ and the elements of the coset $(H\times H')(\pi,\pi')$
restricted to $V_i\times V_i$ are from the group $G_i\times G_i$ which
is polynomially bounded in input size. Let $\Omega$ denote the entire
orbit of $D_i$ under the action of the group $G_i\times G_i$. Clearly,
$|\Omega|\le |G_i|^2$ and therefore is polynomially bounded in input
size and can be computed. Now, $D_i$ is just a point in the set
$\Omega$ and we can compute its pointwise stabilizer subcoset in
$(H\times H')(\pi,\pi')$ by the Schreier-Sims algorithm (as outlined
above) in time polynomial in $|\Omega|$ and the generating sets sizes
of $H$ and $H'$. Repeating this procedure for each $D_i, 1\le i\le r$
yields the subcoset that maps $D_i$ to $D_i$ for each $i$. This
completes the proof sketch. 
\end{proof}

We now describe the polynomial-time algorithm for $\haut$.

\begin{theorem}\label{haut}
There is a polynomial-time algorithm for $\haut$.
\end{theorem}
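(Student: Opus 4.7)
The plan is a dynamic programming algorithm processing color classes in order. For $j \in \{0, 1, \ldots, r\}$, let $\phi_j(e) := e \cap (V_1 \cup \cdots \cup V_j)$ project a hyperedge $e \in E$ to its first $j$ color components, and let $\Psi_j$ denote the multiset $\{\phi_j(e) : e \in E\}$. Define $L_j \le G_1 \times \cdots \times G_j$ to be the subgroup preserving $\Psi_j$ as a multiset. Then $L_0$ is trivial, $L_r = \Aut(X) \cap (G_1 \times \cdots \times G_r)$ is the desired output, and $L_j \le L_{j-1} \times G_j$ because preserving $\Psi_j$ forces preserving its marginal $\Psi_{j-1}$. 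The algorithm computes generating sets for $L_0, L_1, \ldots, L_r$ in sequence.

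For the inductive step $L_{j-1} \to L_j$, for each distinct $\psi \in \Psi_{j-1}$ let $M_\psi$ be the multiset of $j$-th components $\{e_j : \phi_{j-1}(e) = \psi\}$ of the edges projecting to $\psi$. A pair $(\sigma, \tau) \in L_{j-1} \times G_j$ belongs to $L_j$ iff $M_{\sigma(\psi)} = \tau(M_\psi)$ for every $\psi$. Since $G_j$ is given as an explicit list, we enumerate each $\tau \in G_j$ in polynomial time. For each $\tau$, we seek a witness $\sigma_\tau \in L_{j-1}$; the set of witnesses, if nonempty, is a right coset of the kernel subgroup $K := \{\sigma \in L_{j-1} : M_{\sigma(\psi)} = M_\psi \text{ for all } \psi\}$ obtained at $\tau$ equal to the identity. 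Given $K$ and a representative $\sigma_\tau$ for each feasible $\tau$, Fact~\ref{union} assembles a generating set for $L_j$ from generators of $K \times \{e\}$ together with the elements $(\sigma_\tau, \tau)(\sigma_e, e)^{-1}$.

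To find the witness $\sigma_\tau$ (or compute $K$), process the distinct $\phi_{j-1}$-classes $\psi_1, \psi_2, \ldots$ iteratively, maintaining the current feasible coset in $L_{j-1}$ consistent with all constraints processed so far. At step $t$, the new constraint $\sigma(\psi_t) \in T_\tau(\psi_t) := \{\psi' : M_{\psi'} = \tau(M_{\psi_t})\}$ is a union of right cosets of $\mathrm{Stab}_{L_{j-1}}(\psi_t)$, one per valid target reachable from the current coset. Each transporter coset $\{\sigma : \sigma(\psi_t) = \psi'\}$ is obtained via the Schreier--Sims algorithm, and intersected with the current feasible coset using the restricted coset intersection of Lemma~\ref{inter}; results over multiple targets are combined via Fact~\ref{union}.

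The main obstacle is preventing the maintained feasible-set description from blowing up exponentially over the $|\Psi_{j-1}|$ iterative steps, since each new constraint is a union of cosets rather than a single coset. Following the dynamic-programming strategy of \cite{adkt}, the resolution exploits the $L_{j-1}$-orbit structure on the polynomially sized set $\Psi_{j-1}$: within the current coset, the reachable valid targets fall into polynomially many stabilizer orbits, so only polynomially many sub-cosets must be tracked and merged per step. With polynomial work per outer iteration, $r$ outer iterations, and polynomial-time coset operations throughout, the algorithm runs in polynomial time, establishing the theorem.
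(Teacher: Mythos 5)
Your high-level frame (sweep the color classes in order, use Fact~\ref{union} and Lemma~\ref{inter} as the group-theoretic workhorses) resembles the paper's, but your decomposition is genuinely different and it breaks at the inductive step. The paper does not maintain a single group $L_j$ per level; it indexes its dynamic-programming table by \emph{pairs of $(\ell)$-blocks}, i.e.\ by the pattern of a hyperedge on the \emph{later} color classes $V_{\ell+1},\dots,V_r$, and computes the coset $\ISO(A_{[\ell]},B_{[\ell]})$ of isomorphisms of the block restrictions to the earlier classes. The point of that conditioning is that inside a fixed $(\ell)$-block the $(\ell-1)$-sub-blocks are distinguished exactly by their $V_\ell$-sections, so once $\tau\in G_\ell$ is enumerated (possible because $G_\ell$ is listed explicitly), the matching of sub-blocks is \emph{forced}, and each table entry is a single coset obtained by polynomially many applications of Lemma~\ref{inter} followed by one application of Fact~\ref{union}.

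In your scheme that matching is not forced, and this is the gap. For fixed $\tau\in G_j$ you must compute the set of $\sigma\in L_{j-1}$ with $M_{\sigma(\psi)}=\tau(M_\psi)$ for all projection classes $\psi$ (and likewise the kernel $K$). This is a color-class/setwise transporter problem for a group $L_{j-1}$ given only by a generating set, acting on the set of projection classes: Schreier--Sims gives you \emph{point} transporters, but setwise or coloring transporters for generator-given groups are not known to be polynomial-time (they are at least as hard as Graph Isomorphism in general), and the trick behind Lemma~\ref{inter} does not apply because the $L_{j-1}$-orbit of a color class of $\Psi_{j-1}$ can be exponentially large (unlike the orbit of the diagonal $D_i$ under $G_i\times G_i$, whose size is bounded by $|G_i|^2$). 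Your proposed fix --- processing the classes $\psi_1,\psi_2,\dots$ one at a time and intersecting with unions of point-transporter cosets --- makes the intermediate feasible set a union of cosets whose number can multiply at every step; the assertion that ``only polynomially many sub-cosets must be tracked and merged per step'' because of orbit structure is unsubstantiated (the intermediate feasible sets are not cosets, and Fact~\ref{union} is only applicable when the union is itself known to be a coset). The appeal to \cite{adkt} does not rescue this: the whole purpose of the block-based recursion there and in the paper is to avoid precisely this transporter subproblem by conditioning on the later-coordinate pattern, which your single-group-per-level formulation discards. To repair the proof you would need to reintroduce that conditioning (or otherwise show the transporter instance arising here is tractable), which is essentially the paper's construction.
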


\begin{proof}
The algorithm is a dynamic programming strategy exactly as in
\cite{adkt}. But, unlike the problem considered in \cite{adkt}, we do
not have bounded-size color classes in our hypergraph
instances. Instead, we have color classes $V_i$ and explicitly listed
subgroups $G_i\le\sym(V_i)$ on each color class and we have to compute
color-class preserving automorphisms $\pi\in\Aut(X)$ that, when
restricted to each color class $V_i$ belong to the corresponding
$G_i$. We now describe the algorithm.

The subproblems of this dynamic programming algorithm involve
hypergraphs $(V,E)$ with multiple hyperedges (i.e., $E$ is a
multi-set). Thus, we may assume that the input $X$ too is a
\emph{multi-hypergraph} given with the vertex set partition
$V=\uplus_{\ell=1}^r V_\ell$, and groups $G_\ell\le\sym(V_\ell)$
explicitly listed as permutations. A bijection $\varphi:V\to V$ is an
automorphism of interest if $\varphi$ maps each $V_\ell$ to $V_\ell$
such that:

\begin{itemize}
\item The permutation $\varphi$ restricted to $V_\ell$ is an
element of the group $G_\ell$.
\item The map induced by $\varphi$ on $E$ preserves the hyperedges
  with their multiplicities (for each hyperedge $e\subseteq V$, $e$
  and $\varphi(e)$ have the same multiplicity in $E$). 
\end{itemize}

We first introduce some notation. For $\ell\in [r]$ and any multi-set
$D$ of hyperedges $e\subseteq V$, let $D_{[\ell]}$ denote the
multi-hypergraph $(V_{[\ell]},\{e\cap V_{[\ell]}\mid e\in D\})$ on
  vertex set $V_{[\ell]}=V_1\uplus\cdots\uplus V_\ell$. Further, let
  $D_\ell$ denote the multi-hypergraph $(V_\ell,\{e\cap V_\ell\mid
    e\in D\})$ on vertex set $V_\ell$. For two multi-hypergraphs
    $D_{[\ell]}$ and $D'_{[\ell]}$ let $\ISO(D_{[\ell]},D'_{[\ell]})$
    denote the coset of all isomorphisms between them that belong to
    $G_1\times\dots\times G_\ell$.

For $\ell\in [r]$ we define an equivalence relation $\equiv_\ell$ on
the hyperedges in $E$: for hyperedges $e_1,e_2\in E$ we say
$e_1\equiv_\ell e_2$ if
\[
e_1\cap V_j= e_2\cap V_j\text{ for } j=\ell+1,\dots,r.
\]

The equivalence classes of $\equiv_\ell$ are called
\emph{$(\ell)$-blocks}. For $\ell \le j$, notice that $\equiv_\ell$ is
a refinement of $\equiv_j$. Thus, if $e_1$ and $e_2$ are in the same
$(\ell)$-block then they are in the same $(j)$-block for all
$j\ge\ell$.

The algorithm works in stages $\ell=0,\dots, r$. In stage $\ell$, the
algorithm considers the multi-hypergraphs $A_{[\ell+1]}$ induced by
the different $(\ell)$-blocks $A$ on the vertex set
$V_{[\ell+1]}$. For each pair of
$(\ell)$-blocks $A,B$ the algorithm computes the cosets
$\ISO(A_{[\ell]},B_{[\ell]})$ (unless $\ell=0$) using the cosets of the form
$\ISO(A^i_{[\ell-1]},B^j_{[\ell-1]})$ computed already.
Finally, for the single $(r)$-block $E$ the algorithm computes the coset 
$\ISO(E_{[r]},E_{[r]})$ which is the desired group $\Aut(X)\cap
G_1\times\dots\times G_r$.

\begin{description}
\item[Stage $0$:] Let $A$ and $B$ be $(0)$-blocks. Then $A$ contains a
  single hyperedge $a$ with multiplicity $|A|$, and $B$ contains $b$
  with multiplicity $|B|$. The coset $\ISO(A_{[1]},B_{[1]})=\emptyset$
  if $\|A\|\neq \|B\|$ or $\|a\cap V_1\|\neq \|b\cap
  V_1\|$. Otherwise, $\ISO(A_{[1]},B_{[1]})\cap G_1$ is a subcoset of
  all elements of $G_1$ that maps $a\cap V_1$ to $b\cap V_1$, which
  can be computed by inspecting the list of elements in $G_1$.
\item[For $\ell:=1$ to $r-1$ do] 
\item[Stages $\ell$:] For each pair $(A,B)$ of $(\ell)$-blocks compute
  the table entry $T(\ell,A,B)=\ISO(A_{[\ell]},B_{[\ell]})$ as follows:
\begin{enumerate}
\item Partition the $(\ell)$-blocks $A$ and $B$ into $(\ell-1)$-blocks
  $A^1,\cdots,A^t$ and $B^1,\cdots,B^{t'}$, respectively. If $t\ne t'$
  then $\ISO(A_{[\ell]},B_{[\ell]})$ is empty. 
\item Otherwise, $t=t'$. Clearly, for all $e \in A^1$, $e \cap V_l$ is
  identical.  Let $a_i= e\cap V_\ell, e\in A^i$ and $b_{i'}=e\cap
  V_\ell, e\in B^{i'}$, for $1\le i, i'\le t$. Let $S_\ell\subset
  G_\ell$ be the subcoset of all permutations $\tau\in G_\ell$ such
  that $\tau$ (injectively) maps the set $\{a_1,a_2,\ldots,a_t\}$ to
    the set $\{b_1,b_2,\ldots,b_t\}$. For each $\tau\in S_|ell$, we
      denote by $\hat{\tau}$ this induced mapping that injectively
      maps the set $\{a_i\mid 1\le i\le t\}$ to
        $\{b_{\hat{\tau}(i)}\mid 1\le i\le t\}$.

We can compute $S_\ell$ in polynomial time since $G_\ell$ is given as
an explicit list as part of the input.

\item For $\tau\in S_\ell$, recall that $A^j_{[\ell-1]}$ and
  $B^{\hat{\tau}(j)}_{[\ell-1]}$ denote the multi-hypergraphs obtained from
  the $(\ell-1)$-blocks $A^j$ and $B^{\hat{\tau}(j)}$, where $j\mapsto
  \hat{\tau}(j)$ for $\tau\in S_\ell$ means that $\tau$ maps $a_j$ to
  $b_{\tau(j)}$. Then it is clear that we have
  \begin{eqnarray}\label{one}
  \ISO(A_{[\ell]},B_{[\ell]})=\bigcup_{\tau\in S_\ell}\bigcap_{j=1}^t
  \ISO(A^j_{[\ell-1]},B^{\hat{\tau}(j)}_{[\ell-1]})\times \{\tau\}
  \end{eqnarray}
where we have already computed the coset
$\ISO(A^j_{[\ell-1]},B^{\pi(j)}_{[\ell-1]})$.
\item In order to compute the coset $\ISO(A_{[\ell]},B_{[\ell]})$ from
  Equation~\ref{one}, we cycle through the polynomially many $\tau\in
  S_\ell$, and compute each coset intersection $\bigcap_{j=1}^t
  \ISO(A^j_{[\ell-1]},B^{\hat{\tau}(j)}_{[\ell-1]})$
  by repeated application of the restricted coset intersection
  algorithm of Lemma~\ref{inter}. We can write a generating set for the 
union of the cosets over all $\tau$ using Fact \ref{union}.
\end{enumerate}
\item[Output:] In the last step, the unique $(r)$-block is the entire set of
  hyperedges $E$, and the table entry
  $T(r,E_{[r]},E_{[r]})=\ISO(E_{[r]},E_{[r]})$.
\end{description}

It is clear from the description that the running time is polynomially
bounded in $|E|, |V|$ and $\max_{1\le \ell\le r}|G_\ell|$. 
\end{proof}

\noindent\textbf{Acknowledgement.}~~ We thank Saket Saurabh for
suggesting the problem of obtaining a faster isomorphism algorithm for
graphs of bounded eigenvalue multiplicity using the $\cmp$ time
algorithm for $\ggi$.

\bibliographystyle{plain}

\end{document}